\documentclass[letterpaper,twocolumn,10pt]{article}
\usepackage{usenix}
\usepackage{amsmath,amssymb,amsthm,booktabs,graphicx}
\usepackage{tikz}
\usetikzlibrary{arrows.meta,positioning,calc,fit}
\usepackage[capitalize,noabbrev]{cleveref}
\microtypecontext{spacing=nonfrench}

\makeatletter
\long\def\@makecaption#1#2{%
  \vskip\abovecaptionskip
  \sbox\@tempboxa{#1. #2}%
  \ifdim \wd\@tempboxa >\hsize
    #1. #2\par
  \else
    \global\@minipagefalse
    \hb@xt@\hsize{\hfil\box\@tempboxa\hfil}%
  \fi
  \vskip\belowcaptionskip}
\makeatother

\newif\ifanonymous
\anonymousfalse

\newcommand{\Tr}{\operatorname{Tr}}
\newcommand{\E}{\mathbb{E}}
\newcommand{\cE}{\mathcal{E}}
\newcommand{\cN}{\mathcal{N}}
\newcommand{\cA}{\mathcal{A}}
\newcommand{\norm}[1]{\left\lVert#1\right\rVert}
\newtheorem{theorem}{Theorem}
\newtheorem{proposition}{Proposition}
\newcommand{\cmark}{\ensuremath{\checkmark}}
\newcommand{\xmark}{{\color{red!70!black}\ensuremath{\times}}}
\newcommand{\pmark}{\ensuremath{\circ}}
\newcommand{\namark}{\textsc{n/a}}

\newcommand{\ResultSeeds}{200}
\newcommand{\ResultNMax}{8}

\newcommand{\HaarRelativeError}{1.6}

\newcommand{\HSEBestExactFraction}{18.5}

\newcommand{\AttackN}{8}
\newcommand{\AttackDepth}{12}

\newcommand{\AttackAwareMean}{0.150}

\newcommand{\AttackBlindMean}{0.020}

\newcommand{\AttackReduction}{86.7}

\newcommand{\StimNMax}{17}
\newcommand{\StimSeeds}{20}
\newcommand{\StimIdealAgreement}{100.00}
\newcommand{\StimNoisyMean}{\ensuremath{5.88\!\times\!10^{-6}}}

\newcommand{\StimNoisyAcceptance}{\ensuremath{1.23\!\times\!10^{-5}}}

\newcommand{\AblationN}{8}
\newcommand{\AblationK}{1}
\newcommand{\AblationDepth}{12}
\newcommand{\AblationSeeds}{200}

\title{\Large \bf Auditing Structured Randomness for Quantum Error Correction\\
under a Bounded Cloud Fault Model}
\ifanonymous
  \author{{\rm Anonymous submission}}
\else
  \author{%
    Ziqing Guo$^{1}$ \quad Anthony Lawrence$^{2}$ \quad Renyu Wang$^{2}$\\
    Randy Kuang$^{3}$ \quad Ziwen Pan$^{1}$\\[0.4em]
    $^{1}$Texas Tech University \quad $^{2}$Lightrider \quad $^{3}$Quantropi\\
    \texttt{ziqguo@ttu.edu, ziwen.pan@ttu.edu}\\
    \texttt{alawrence@lightriderinc.com, rwang@lightriderinc.com}\\
    \texttt{randy.kuang@quantropi.com}}
\fi
\date{}

\begin{document}
\maketitle

\begin{abstract}
  Cloud quantum processors compile submitted quantum error correction circuits
  and may colocate them with untrusted workloads.  A fixed public encoder gives
  a fault-injection adversary a reusable target.  Per-run reseeding changes the
  physical-to-logical fault map.  Exact Haar-random encoders have exponential
  circuit cost.  Efficient random ensembles provide average-moment guarantees
  and leave worst-case accepted corruption uncharacterized.

  We define accepted logical disturbance, an acceptance-weighted measure of
  harmful logical action in accepted results, and derive its exact Haar
  expectation.  We evaluate a polynomial-cost seeded Clifford encoder family
  using dense linear algebra and gate-level stabilizer simulation against
  faults chosen before or after the encoder is known.  Reseeding reduces mean
  accepted logical disturbance from \AttackAwareMean{} for faults chosen after
  learning each encoder to \AttackBlindMean{} for one fault chosen before it is
  known.  The \AttackReduction\% reduction results from rejection.  The fixed
  distance-three $[[5,1,3]]$ code corrects every tested weight-one Pauli, while
  \HSEBestExactFraction\% of sampled encoders in the selected ensemble satisfy
  exact quantum error correction.  The measured reduction quantifies the
  integrity gain from reseeding and separates postselected
  detection from exact correction under explicit fault and attacker-knowledge
  models.
\end{abstract}

\section{Introduction}

Cloud quantum-computing systems compile and execute submitted circuits on
remote processors~\cite{javadiabhari2024quantum}.  Research prototypes add
time- and space-multiplexing of programs on shared hardware
~\cite{tao2025hyperq,das2019case,saki2020analysis}.  To corrupt a delegated
computation, an adversary requires a bounded physical channel into the victim
circuit.  Prior work studies control-pulse manipulation~\cite{xu2024fault} and
crosstalk between co-tenant circuits
~\cite{saki2020analysis,deshpande2022antivirus} as such channels.
Quantum error correction (QEC) is the standard protection against
physical
faults~\cite{shor1995scheme,steane1996error,calderbank1996good,laflamme1996perfect}.
It encodes $k$ logical qubits into $n$ physical qubits so that every
supported physical fault is detected or
corrected~\cite{knill1997theory,gottesman1997stabilizer,terhal2015quantum}.
Fault-tolerance theory guarantees that computations of arbitrary length
survive once physical error rates fall below a constant
threshold~\cite{aharonov1997fault,knill1998resilient}.  Hardware experiments
have progressed from error suppression with scaled surface codes to a
below-threshold surface-code memory~\cite{acharya2023suppressing,google2025below}.
Conventional QEC fixes the code and its physical-to-logical map across runs;
the surface code is a leading practical architecture
~\cite{dennis2002topological,fowler2012surface}.  A fixed encoder therefore
also fixes the map from physical faults to logical actions.  An
adversary can therefore search once for a low-weight fault the code
misses and replay it in every subsequent execution.  This amortized
search-and-replay logic is well established for fixed cryptographic
implementations~\cite{boneh2001importance,biham1997differential,barenghi2012fault};
multi-tenant crosstalk supplies a concrete quantum setting in which repeated
targeting matters~\cite{saki2020analysis,deshpande2022antivirus}.  Randomness could
remove the repeatable target.  Haar-random codes approximately attain the
quantum Hamming bound~\cite{ma2025haar} and exhibit spectrally characterized
coding transitions~\cite{sommers2025spectral}, and low-depth random circuits
define good codes at modest
depth~\cite{brown2013short,gullans2021quantum}.

The National Institute of Standards and Technology (NIST) finalized its
first post-quantum cryptography standards in August 2024 as Federal
Information Processing Standards
(FIPS)~203--205~\cite{nist2024fips203,nist2024fips204,nist2024fips205}
and its 2024 draft transition plan proposes deprecating quantum-vulnerable
public-key algorithms by 2030 and disallowing them after 2035
~\cite{nist2024ir8547}.  Modules implementing approved cryptographic functions
can be validated under FIPS~140-3, which specifies a cryptographic boundary,
sensitive-security-parameter management, self-tests, and mitigation of other
attacks~\cite{nist2019fips140}; Special Publication (SP)~800-90A separately
specifies how approved
deterministic random bit generators construct seeds from seed material
~\cite{barker2015drbg}.  When the encoder is public,
every run re-exposes one searchable target, and no validation regime
measures what a fault-injecting adversary gains against it.  If accepted
results from these machines feed real decisions, the integrity mechanism needs
a similarly explicit boundary, test procedure, and output.  We define an audit
for per-run encoder reseeding that measures accepted logical disturbance for
an attacker that either observes each seed before selecting a fault or commits
to one fault before the seed is drawn.

Reseeding the encoder on every run breaks that repeatability.  Two obstacles
separate the idea from a security claim.  Exact synthesis
of a Haar-random encoder
takes circuits exponential in $n$~\cite{iten2016isometries}.  The Clifford group is a unitary
3-design~\cite{webb2016clifford,zhu2017clifford} and local random
circuits approximate polynomial
designs~\cite{brandao2016designs,harrow2009random,haferkamp2022random}.
These moment bounds characterize average behavior and leave the worst single
fault against a finite-depth sample unbounded.  Attacker-controlled
randomness can invalidate a randomized defense, as demonstrated for randomized
smoothing~\cite{dahiya2024randomness}; in delegated execution the
provider compiles the submitted circuit and learns the seed, and
protocols that verify delegated quantum computation against a malicious
server use client--server interaction, hidden information, or cryptographic
capabilities outside the cloud model considered here
~\cite{broadbent2009universal,fitzsimons2017unconditionally,mahadev2018classical,gheorghiu2019verification}.
We ask which attacker-knowledge model permits reseeding to reduce
accepted logical disturbance, how large that reduction is, and how many
source-level two-qubit instructions the encoder requires.
Addressing these questions requires a metric that excludes detected and
rejected faults from being classified as failures.

Our proposed audit measures accepted logical disturbance, the
acceptance-weighted nonscalar logical action of a fault, which enables
the separation of detected-and-rejected faults from accepted logical
corruption.  The measure assigns zero to a fault that is
rejected or that acts as a harmless global phase (\cref{sec:formal}).  Closed-form Haar expectations
for acceptance and disturbance calibrate the implementation.  The
Hadamard-structured ensemble (HSE) serves as the executable mechanism under
examination.  Each seed generates a Clifford encoder, which is constructed
from Hadamard, phase, permutation, and controlled-NOT (CNOT) layers.  Clifford
circuits map Pauli operators to Pauli operators under conjugation and therefore
admit efficient stabilizer simulation~\cite{gottesman1997stabilizer,aaronson2004stabilizer}.
We evaluate the encoders using dense linear algebra and Stim, a high-performance
stabilizer-circuit simulator~\cite{gidney2021stim}, with shared fault labels in
both implementations (\cref{sec:encoders}).
Attacks select
  the worst weight-one Pauli under two knowledge models.  A seed-aware
  attacker sees each seed before choosing its fault.  A seed-blind
attacker commits to one fault on training seeds and is scored on disjoint
held-out seeds (\cref{sec:threat}).

\Cref{fig:overview} summarizes the mechanism and the audit outcome.  A fresh
seed selects an encoding circuit $V_s$.  After the bounded fault, the client
runs the matching inverse decoder $V_s^\dagger$.  Measuring the decoded
ancillas enables each outcome to be classified as rejected, accepted
harmlessly, or accepted with a logical change.

\begin{figure*}[t]
\centering
\begin{tikzpicture}[
  scale=.96,transform shape,x=1mm,y=1mm,>=Latex,
  font=\sffamily\footnotesize,
  flow/.style={-Latex, line width=1.15pt, draw=black!68},
  stateflow/.style={-Latex, line width=1.35pt, draw=blue!58!black},
  wire/.style={line width=.7pt, draw=black!48},
  title/.style={font=\sffamily\small\bfseries, text=black!80},
  note/.style={font=\sffamily\scriptsize, text=black!58, align=center},
]

\node[title] at (27,38) {Seeded client};

\foreach \y in {7,13,19,25} {
  \draw[wire] (3,\y) -- (50,\y);
  \fill[teal!70!black] (3,\y) circle (1.35);
}
\node[note, anchor=east] at (1,16) {$|\psi\rangle$};

\foreach \x/\y/\lab in {
  11/7/H,11/19/H,18/13/S,18/25/S,
  35/7/S,35/19/H,43/13/H,43/25/S
} {
  \draw[rounded corners=.7pt, fill=blue!9, draw=blue!58!black,
        line width=.7pt] (\x-2.5,\y-2.5) rectangle (\x+2.5,\y+2.5);
  \node[font=\sffamily\scriptsize, text=blue!48!black] at (\x,\y) {\lab};
}

\foreach \x/\ya/\yb in {26/7/19,30/13/25} {
  \fill[orange!78!black] (\x,\ya) circle (1.2);
  \draw[line width=.8pt, draw=orange!78!black] (\x,\ya) -- (\x,\yb);
  \draw[line width=.8pt, draw=orange!78!black, fill=white]
    (\x,\yb) circle (1.8);
  \draw[line width=.65pt, draw=orange!78!black]
    (\x-1.3,\yb) -- (\x+1.3,\yb);
}

\draw[line width=1.15pt, draw=orange!78!black, fill=orange!8]
  (8,31) circle (3);
\draw[line width=1.15pt, draw=orange!78!black]
  (11,31) -- (20,31) -- (20,28.5) -- (17.5,28.5)
  -- (17.5,31);
\node[font=\sffamily\scriptsize\bfseries, text=orange!70!black] at (8,31) {$s$};
\node[note] at (27,1) {fresh executable encoder $V_s$};

\draw[dashed, line width=.8pt, draw=violet!72!black] (57,1) -- (57,14);
\draw[dashed, line width=.8pt, draw=violet!72!black] (57,25) -- (57,33);
\node[font=\sffamily\tiny\bfseries, text=violet!72!black,
      fill=white, inner sep=1pt] at (57,34.5) {TRUST BOUNDARY};
\draw[stateflow] (50,18) -- node[above, note, text=blue!52!black,
  fill=white, inner sep=1pt] {$V_s|\psi\rangle$} (69,18);

\node[title] at (91,38) {Fault-exposed processor};

\draw[rounded corners=2pt, line width=1.05pt, draw=orange!72!black,
      fill=orange!6] (70,5) rectangle (110,29);
\foreach \x in {75,82,89,96,103} {
  \draw[line width=.85pt, draw=orange!70!black] (\x,5)--(\x,2);
  \draw[line width=.85pt, draw=orange!70!black] (\x,29)--(\x,32);
}
\foreach \y in {9,12,24,27} {
  \draw[line width=.85pt, draw=orange!70!black] (70,\y)--(67,\y);
  \draw[line width=.85pt, draw=orange!70!black] (110,\y)--(113,\y);
}

\foreach \x in {77,86,95,104} {
  \foreach \y in {11,18,25} {
    \fill[teal!68!black] (\x,\y) circle (1.55);
  }
}
\foreach \x in {77,86,95} {
  \foreach \y in {11,18,25} {
    \draw[line width=.65pt, draw=teal!50] (\x,\y) -- (\x+9,\y);
  }
}
\foreach \x in {77,86,95,104} {
  \draw[line width=.65pt, draw=orange!48] (\x,11) -- (\x,25);
}

\draw[line width=1.35pt, draw=red!75!black, fill=red!10]
  (96,34) -- (88,21) -- (94,23) -- (90,10)
  -- (103,26) -- (97,24) -- cycle;
\node[font=\sffamily\scriptsize\bfseries, text=red!70!black,
      fill=white, draw=red!28, rounded corners=1.2pt,
      inner xsep=2pt, inner ysep=1.2pt]
  at (109,34) {$E\in\mathcal E_t$};

\draw[line width=.7pt, draw=violet!72!black]
  (73,-3) .. controls (77,1) and (83,1) .. (87,-3)
  .. controls (83,-7) and (77,-7) .. (73,-3) -- cycle;
\fill[violet!72!black] (80,-3) circle (1.4);
\node[note, anchor=north, text=violet!70!black] at (80,-7)
  {co-tenant\\seed-blind};

\draw[line width=.7pt, draw=orange!76!black]
  (94,-3) .. controls (98,1) and (104,1) .. (108,-3)
  .. controls (104,-7) and (98,-7) .. (94,-3) -- cycle;
\fill[orange!76!black] (101,-3) circle (1.4);
\node[note, anchor=north, text=orange!70!black] at (101,-7)
  {provider\\seed-aware};

\node[title] at (147,38) {Decode and audit};

\draw[line width=1.1pt, draw=teal!70!black, fill=teal!7]
  (137,31) .. controls (129,28) and (126,28) .. (126,22)
  .. controls (126,12) and (132,7) .. (137,4)
  .. controls (142,7) and (148,12) .. (148,22)
  .. controls (148,28) and (145,28) .. (137,31) -- cycle;
\node[font=\sffamily\large, text=teal!57!black] at (137,19) {$V_s^\dagger$};
\node[note, text=teal!55!black] at (137,12) {syndrome test};

\draw[-{Latex[length=2.4mm,width=2mm]}, line width=1.35pt,
      draw=blue!58!black, line cap=round] (113,18) -- (125,18);
\node[note, text=blue!52!black] at (119,20.7) {$EV_s|\psi\rangle$};

\coordinate (fork) at (151,18);
\draw[line width=1.15pt, draw=black] (148,18) -- (fork);
\draw[flow, draw=black!42] (fork) |- (156,27);
\draw[flow, draw=teal!72!black] (fork) -- (156,18);
\draw[flow, draw=red!72!black] (fork) |- (156,9);

\fill[black!42] (160,27) circle (2.25);
\fill[teal!72!black] (160,18) circle (2.25);
\fill[red!74!black] (160,9) circle (2.25);
\node[anchor=west, font=\sffamily\footnotesize\bfseries, text=black!65]
  at (164,27) {reject};
\node[anchor=west, font=\sffamily\footnotesize\bfseries, text=teal!55!black]
  at (164,18) {unchanged};
\node[anchor=west, font=\sffamily\footnotesize\bfseries, text=red!68!black]
  at (164,9) {corrupt};
\node[note, anchor=north, text=red!68!black] at (165,4)
  {integrity failure};

\end{tikzpicture}
\caption{Integrity audit.  A client samples a fresh executable encoder
$V_s$ and submits the encoded state to a remote processor exposed to a bounded
fault $E$.  The provider observes the compiled seed; a co-tenant that commits
before compilation is seed-blind.  Inverse decoding and the ancilla syndrome
test separate
rejection, harmless acceptance, and accepted logical corruption.}
\label{fig:overview}
\end{figure*}

We present three key findings.  First, the audit is calibrated as the empirical
Haar mean disturbance replicates the derived closed-form expectation at the
maximum dense width.  Second, reseeding diminishes the effectiveness of a
seed-blind attacker.  Detection and rejection of the committed
fault produce the measured reduction because acceptance equals disturbance for
the selected Clifford--Pauli faults.  Third, individual samples from a random
stabilizer ensemble lack an exact-correction guarantee.  The fixed $[[5,1,3]]$
control---which encodes one logical qubit into five physical qubits with
distance three---successfully corrects the entire tested fault set.  Our
results therefore distinguish the roles of the two mechanisms: designed QEC
corrects the supported fault set, and reseeding detects faults selected without
the fresh seed.

\begin{table*}[t]
  \centering
  \caption{Comparison with prior approaches to randomness and fault
  robustness in QEC\@.}
  \label{tab:gap}
  \small
  \begin{tabular}{@{}lccccccc@{}}
  \toprule
  Approach
    & \shortstack{Poly.\\cost}
    & \shortstack{Per-run\\reseed}
    & \shortstack{Fault-inj.\\adversary}
    & \shortstack{Aware/blind\\split}
    & \shortstack{Reject vs.\\corrupt}
    & \shortstack{Worst\\fault}
    & \shortstack{Gate-level\\evidence} \\
  \midrule
  Fixed stabilizer QEC~\cite{knill1997theory,laflamme1996perfect}
    & \cmark & \xmark & \xmark & \xmark & \pmark & \cmark & \cmark \\
  Haar-random codes~\cite{ma2025haar,sommers2025spectral}
    & \xmark & \cmark & \xmark & \xmark & \xmark & \xmark & \xmark \\
  Unitary designs / random circuits~\cite{dankert2009designs,webb2016clifford,brandao2016designs}
    & \cmark & \cmark & \xmark & \xmark & \xmark & \xmark & \xmark \\
  Quantum authentication~\cite{barnum2002authentication}
    & \cmark & \cmark & \cmark & \pmark & \cmark & \cmark & \xmark \\
  Fault injection on quantum hardware~\cite{xu2024fault,deshpande2022antivirus,oliveira2022qufi}
    & \namark & \namark & \cmark & \xmark & \xmark & \pmark & \cmark \\
  Randomized encodings for classifiers~\cite{gong2024randomized}
    & \cmark & \cmark & \pmark & \xmark & \xmark & \xmark & \xmark \\
  \midrule
  \textbf{This work (HSE audit)}
    & \cmark & \cmark & \cmark & \cmark & \cmark & \cmark
    & \cmark~($n\!=\!\StimNMax$) \\
  \bottomrule
  \end{tabular}

  \vspace{3pt}
  \begin{minipage}{.97\textwidth}
  \footnotesize
  \cmark{} provided; \pmark{} partial (discussed in \cref{sec:related});
  \xmark{} absent; \namark{} not applicable.  Aware/blind split requires
  seed-blind faults selected on training seeds and scored on disjoint
  held-out seeds; reject vs.\ corrupt requires the metric to distinguish
  detected-and-rejected faults from faults that corrupt accepted results;
  worst fault denotes the maximum over individual faults
  average or moment bound.
  \end{minipage}
\end{table*}

\Cref{tab:gap} compares the proposed audit with prior lines of work.  The audit combines
an executable, per-run reseeded encoder, an explicit
fault-injection adversary, separate seed-aware and seed-blind knowledge models, and a
metric that separates rejection from accepted corruption; the
per-approach discussion is provided in \cref{sec:related}.  Quantum
authentication relies on a key concealed from the attacker.  Provider-side
compilation reveals the encoder and removes that condition.

\paragraph{Contributions.}
\begin{itemize}
  \item \textbf{Integrity Metric:} The accepted logical disturbance
        distinguishes rejection from accepted logical corruption.  Its
        closed-form Haar expectation, as demonstrated in \cref{thm:haar},
        provides an implementation verification.
  \item \textbf{Executable Ensemble:} The HSE, a seeded Clifford encoder
        family with polynomial source cost, is consistently audited from dense
        linear algebra to gate-level Stim simulation, utilizing shared fault
        labels.
  \item \textbf{Attacker-Knowledge Methodology:} Seed-blind faults are selected
        based on training seeds and evaluated on held-out seeds.  The disjoint
        split prevents test leakage and quantifies the reduction in seed-blind faults
        compared to the seed-aware bound.
  \item \textbf{Reproducible Implementation:} A containerized pipeline, equipped with
        resumable checkpoints, provenance manifests, and deterministic tests,
        regenerates every numerical result presented in this paper.
\end{itemize}

The evaluation covers bounded local Pauli injections from \cref{eq:error-set}
and independent depolarizing noise after encoder and decoder gates with rates
$p_1=10^{-3}$ for one-qubit gates and $p_2=10^{-2}$ for two-qubit gates
(\cref{sec:method}).  Randomized QEC provides integrity only
within the seed-aware and seed-blind fault models defined here.  Quantum
authentication addresses arbitrary physical unitaries from an
untrusted party~\cite{barnum2002authentication}.

\paragraph{Implementation availability.}
A containerized implementation reproduces every number, figure, and table macro in
this paper from seeded generators, with provenance manifests and
deterministic tests.  The artifact is available at
\url{https://anonymous.4open.science/r/usenix_qec_hse-32FE/README.md};
\cref{app:open-science} describes its contents and release process.

\section{System and Threat Model}
\label{sec:threat}

\paragraph{Workflow.}
For every execution, a client selects an encoder family $\mathsf{F}$ (Haar,
uniform Clifford, fixed QEC, or HSE), physical width $n$ (the number of
physical qubits), logical width $k$ (the number of logical qubits), circuit
depth $d$ (the number of HSE mixing layers when $\mathsf{F}=\mathrm{HSE}$), and
a fresh integer seed $s$.  We write $K=2^k$ for the logical Hilbert-space
dimension and $N=2^n$ for the physical Hilbert-space dimension.  The
corresponding unitary $U_s$ encodes the logical state together with $n-k$
ancillas initialized to $|0\rangle$.  Its first $K$ encoded basis states define
an isometry $V_s:\mathbb{C}^{K}\rightarrow\mathbb{C}^{N}$ satisfying
$V_s^\dagger V_s=I_K$; $I_K$ denotes the $K\times K$ identity.  A physical
fault acts after encoding and before ideal projection, recovery, or inverse
decoding.  Classical dense linear algebra generates the Haar isometries and
computes the integrity and recovery metrics.  The exact-QEC residual
$R_{\max}$ is the largest departure from the Knill--Laflamme scalar-action
condition over all pairs of supported faults (\cref{eq:kl-residual}).  Petz
entanglement fidelity $F_e$ scores preservation of half of a maximally
entangled logical state after a fixed transpose
recovery~\cite{barnum2002reversing,beny2010general}, with $F_e=1$ denoting
perfect recovery (\cref{eq:petz}).

The executable challenge contains state preparation, encoding, injection,
inverse decoding, and measurement, all of which run on the remote processor.
Here inverse decoding means applying $U_s^\dagger$ with the same seed used for
encoding.  Measuring the $n-k$ decoded ancillas implements the syndrome test.
All-zero ancillas accept, and any nonzero ancilla rejects.  In the equivalent
dense calculation, projection onto the code space supplies the same
accept/reject split.
Provider-side compilation therefore reveals the circuit and seed, making the
provider seed-aware.  The seed-blind model is restricted to an external
fault source or co-tenant that commits before observing the victim
circuit.

\paragraph{Fault budget.}
For a Pauli string $Q=Q_1\otimes\cdots\otimes Q_n$, each factor
$Q_j\in\{I,X,Y,Z\}$ acts on physical qubit $j$, and
$\operatorname{wt}(Q)=|\{j:Q_j\neq I\}|$ counts its nonidentity factors.  For
an integer fault budget $t\in\{1,\ldots,n\}$, the enumerated fault set is
\begin{equation}
  \cE_t=\{Q\in\{I,X,Y,Z\}^{\otimes n}\mid
  1\leq\operatorname{wt}(Q)\leq t\}.
  \label{eq:error-set}
\end{equation}
The primary study uses $t=1$ and enumerates all $3n$ nonidentity faults.  This
model captures a controlled logical abstraction of localized pulse, crosstalk,
and transient control faults.  The complete enumeration excludes coherent
rotations, leakage, correlated multi-qubit faults, and measurement
manipulation.  The gate-level stress test separately inserts independent
single-qubit depolarization with $p_1=10^{-3}$ and two-qubit depolarization with
$p_2=10^{-2}$ after encoder and decoder gates.

\paragraph{Attacker knowledge.}
The seed-aware adversary $\cA_{\mathrm{aware}}$ observes $s$ before choosing
$E\in\cE_t$.  It models a fault source that can observe provider-side
compilation while remaining bounded to \cref{eq:error-set}, and gives a
per-encoder upper bound on the accepted logical disturbance achievable by any
allowed fault.  Reconstruction of victim circuits from
controller side channels is demonstrated~\cite{erata2024quantum}, so
circuit knowledge is conservatively granted to this adversary.  The seed-blind adversary
$\cA_{\mathrm{blind}}$ knows the distribution and parameters but commits to
one fault before the fresh seed is drawn.  It models, for example, a co-tenant
that can target a circuit location but cannot inspect the victim circuit.  In
simulation, location denotes a circuit-wire index before transpilation; the
seed-blind evaluation therefore uses this compiler-independent location.

The empirical seed-blind attack is selected without test leakage.  For each
condition, the first half of the fixed seed schedule is a training set.  The
fault with the largest training mean is frozen and evaluated on the disjoint
second half.  The aware adversary maximizes independently for every test seed.
The fixed training/test separation prevents a post hoc maximum over test outcomes from being
misreported as seed-blind performance.

\paragraph{Security event.}
The audit quantifies the acceptance-weighted nonscalar logical disturbance
caused by a fault.  A nonzero value indicates a nonscalar accepted action on
the logical space.  For an ideal Clifford--Pauli challenge,
accepted logical disturbance coincides with a binary accepted logical-fault
outcome.  For a general Haar compression, it is an entanglement-disturbance
probability averaged over the maximally mixed logical input
(\cref{eq:corr}).  Rejection preserves integrity but
reduces availability, so acceptance is reported separately.  Denial of
service, result-bit rewriting, credential compromise, and confidentiality are
separate security objectives.  Because the compiling provider learns the
circuit and seed, provider-controlled faults are evaluated with the seed-aware
metric.

\section{Integrity Metric and Haar Reference}
\label{sec:formal}

A fault can be rejected, accepted as a benign scalar within the code space, or
accepted with a logical alteration.  Accepted logical alteration constitutes
the security failure studied here.  Fidelity-style metrics combine all three
outcomes; accepted logical disturbance isolates the third outcome, and
Theorem~1 establishes its Haar reference value.

For each encoder and fault, the audit returns two numbers in $[0,1]$.
They are acceptance $p_{\mathrm{acc}}$ and
accepted logical disturbance $c$.  Their three components are rejection
$1-p_{\mathrm{acc}}$, harmless accepted weight $p_{\mathrm{acc}}-c$, and
accepted-corruption weight $c$.  Thus $(p_{\mathrm{acc}},c)=(0,0)$ means the
fault is always detected and rejected, $(1,0)$ means it is accepted but harmless,
and $(1,1)$ means it is accepted as a logical change.  Across faults and seeds,
the report also gives the worst seed-aware value, the held-out seed-blind value,
and the exact-QEC residual.

Let $V$ denote one encoder isometry (the seed subscript is omitted locally),
and let $P=VV^\dagger$ be its rank-$K$ code-space projector.  For any complex
matrix $M$, $M^\dagger$ is its conjugate transpose,
$\Tr(M)$ is its trace, and
$\norm{M}_F^2=\Tr(M^\dagger M)$ is its squared Frobenius norm.  Define the
supported operator list
$\{E_a\}_{a=0}^{q}=\{E_0=I_N\}\cup\cE_t$, where
$q=|\cE_t|$ and $I_N$ is the physical identity.  The exact-correction
condition is~\cite{knill1997theory}
\begin{equation}
  PE_a^\dagger E_bP=\alpha_{ab}P,
  \qquad 0\leq a,b\leq q.
  \label{eq:kl}
\end{equation}
Set $B_{ab}=V^\dagger E_a^\dagger E_bV$ and
$\alpha_{ab}=\Tr(B_{ab})/K$.  The dimension-normalized residuals are
\begin{equation}
\begin{aligned}
 r_{ab}&=K^{-1/2}\norm{B_{ab}-\alpha_{ab}I_K}_F,\\
 R_{\max}&=\max_{0\leq a,b\leq q}r_{ab},\\
 R_{\mathrm{rms}}&=\left[\frac{1}{(q+1)^2}
       \sum_{a,b=0}^{q}r_{ab}^2\right]^{1/2}.
\end{aligned}
 \label{eq:kl-residual}
\end{equation}
Thus $R_{\max}=0$ is equivalent to exact correction of the enumerated
operator list, while $R_{\mathrm{rms}}$ summarizes its average pairwise
violation.

For a unitary physical fault $E\in\cE_t$, let
$\rho_{\mathrm{code}}=P/K$ be the maximally mixed
encoded logical state.  Define the compressed logical action $A_E$, its scalar
coefficient $\mu_E$, and its traceless component $A_E^0$ by
\begin{align}
 A_E&=V^\dagger EV,&
 \mu_E&=K^{-1}\Tr(A_E),&
 A_E^0&=A_E-\mu_EI_K.
 \label{eq:compressed-action}
\end{align}
Projecting the faulty state back onto the code space gives the acceptance
probability, and removing the scalar component gives accepted logical
disturbance:
\begin{align}
 p_{\mathrm{acc},V}(E)
   &\equiv\Tr\!\left[P E\rho_{\mathrm{code}}E^\dagger\right]
    =K^{-1}\norm{A_E}_F^2,\label{eq:pacc}\\
 c_V(E)&\equiv K^{-1}\norm{A_E^0}_F^2.
 \label{eq:corr}
\end{align}
The first equality in \cref{eq:pacc} is the operational experiment and the
second is the matrix formula evaluated by the implementation.  The disturbance in
\cref{eq:corr} is weighted by the probability of acceptance.  In the ideal
Clifford--Pauli experiment $c_V(E)$ coincides with
the probability of a binary accepted logical-fault event.  For a nonunitary
operator the same formulas define nonnegative weights, but the primary fault
set contains only unitary Paulis.

\begin{proposition}[Operational decomposition]
For every isometry $V$ and physical operator $E$, let $r_{I,E}$ denote
\cref{eq:kl-residual} evaluated for the supported pair $(I_N,E)$.  Then
\begin{equation}
 c_V(E)=p_{\mathrm{acc},V}(E)-\frac{|\Tr(A_E)|^2}{K^2}
       =r_{I,E}^2.
 \label{eq:decomposition}
\end{equation}
Consequently $0\leq c_V(E)\leq p_{\mathrm{acc},V}(E)$, with $c_V(E)=0$ if
and only if the accepted action is scalar on the code space.
\end{proposition}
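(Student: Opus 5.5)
The proof reduces to the Pythagorean identity for the Frobenius inner product: subtracting the orthogonal projection of $A_E$ onto the identity direction $I_K$ leaves $A_E^0$. I will expand $\norm{A_E^0}_F^2=\norm{A_E-\mu_E I_K}_F^2$ directly from \cref{eq:compressed-action}:
\[
\norm{A_E-\mu_E I_K}_F^2=\norm{A_E}_F^2-\overline{\mu_E}\Tr(A_E)-\mu_E\overline{\Tr(A_E)}+|\mu_E|^2K .
\]
Substituting $\mu_E=\Tr(A_E)/K$ turns each of the three correction terms into $\pm|\Tr(A_E)|^2/K$, leaving $\norm{A_E}_F^2-|\Tr(A_E)|^2/K$. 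Dividing by $K$ and using the matrix forms in \cref{eq:pacc} and \cref{eq:corr} gives the first equality in \cref{eq:decomposition}. This step uses only $\Tr(I_K)=K$ and linearity of the trace, so it holds for every isometry $V$ and every operator $E$, unitary or not.

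For the second equality I specialize \cref{eq:kl-residual} to the pair $(a,b)$ with $E_a=I_N$ and $E_b=E$. Then $B_{ab}=V^\dagger I_N E V=A_E$ and $\alpha_{ab}=\Tr(A_E)/K=\mu_E$. Hence $r_{I,E}^2=K^{-1}\norm{A_E-\mu_E I_K}_F^2=c_V(E)$ exactly, by definition. The only point needing care is orientation: the residual uses $E_a^\dagger E_b$, so with $E_a=I_N$ no adjoint lands on $E$. Even with the opposite ordering, $r$ would be unchanged, because the Frobenius norm is invariant under taking adjoints and $\overline{\mu}$ is the scalar coefficient of $A_E^\dagger$.

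For the consequences:
\begin{itemize}
\item Nonnegativity $c_V(E)\ge 0$ is immediate, since $c_V(E)$ is a squared norm.
\item The upper bound $c_V(E)\le p_{\mathrm{acc},V}(E)$ follows from the first equality, because the subtracted term $|\Tr(A_E)|^2/K^2$ is nonnegative. Cauchy--Schwarz, $|\langle I_K,A_E\rangle_F|^2\le K\norm{A_E}_F^2$, is the same fact and is not needed separately.
\item For the characterization of $c_V(E)=0$: it holds iff $A_E^0=0$, i.e.\ $V^\dagger EV=\mu_E I_K$. Conjugating by $V$, and using $V^\dagger V=I_K$ in both directions, this is equivalent to $PEP=\mu_E P$. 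That is exactly the statement that the accepted action is scalar on the code space.
\end{itemize}

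No step is a genuine obstacle. The only place requiring attention is the complex cross terms in the first expansion, which must collapse to a real quantity. They do, because $\overline{\mu_E}\Tr(A_E)=|\Tr(A_E)|^2/K$ is real.
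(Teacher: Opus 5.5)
Your proof is correct and follows the paper's argument. The paper also writes $A_E=\mu_E I_K+A_E^0$ and uses Frobenius orthogonality of the scalar and traceless parts, which your explicit cross-term expansion reproduces; it then specializes $r_{ab}$ to the pair $(I_N,E)$, where $B_{I,E}=A_E$ and $\alpha_{I,E}=\mu_E$, and reads off the consequences, while your remarks on the ordering of the pair and on the equivalence of $V^\dagger EV=\mu_E I_K$ with $PEP=\mu_E P$ are valid details the paper leaves implicit.
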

\begin{proof}
By \cref{eq:compressed-action},
$A_E=\mu_EI_K+A_E^0$ and
$\Tr(A_E^0)=0$.  Hence the scalar and traceless terms are orthogonal in the
Frobenius inner product:
\begin{equation}
\begin{aligned}
 \norm{A_E}_F^2
 &=\norm{\mu_EI_K}_F^2+\norm{A_E^0}_F^2\\
 &=K|\mu_E|^2+\norm{A_E^0}_F^2.
\end{aligned}
 \label{eq:pythagorean}
\end{equation}
Divide \cref{eq:pythagorean} by $K$ and substitute
$\mu_E=\Tr(A_E)/K$ to obtain the first equality in
\cref{eq:decomposition}.  For the supported pair $(I_N,E)$,
$B_{I,E}=A_E$ and $\alpha_{I,E}=\mu_E$; substituting these identities into
\cref{eq:kl-residual} yields $r_{I,E}^2=c_V(E)$.  Nonnegativity follows from
the squared norm.  Because orthogonal projection cannot increase a norm,
$c_V(E)\leq p_{\mathrm{acc},V}(E)$, and equality to zero holds exactly when
$A_E^0=0$, or equivalently when the accepted action is scalar.  Applying
$A_E$ to half of a normalized maximally entangled state gives the same
decomposition: $|\Tr(A_E)|^2/K^2$ is the unchanged-state weight and $c_V(E)$
is the orthogonal weight.
\end{proof}

The primary seed-aware quantity is
\begin{equation}
 C_{\mathrm{aware}}(V_s)=\max_{E\in\cE_t}c_{V_s}(E).
 \label{eq:aware}
\end{equation}
Here $V_s$ is the encoder generated by seed $s$, so
$C_{\mathrm{aware}}(V_s)$ is the best allowed fault after that encoder is
known.  Let $S_{\mathrm{tr}}$ and $S_{\mathrm{te}}$ be nonempty, disjoint sets
of training and test seeds.  The seed-blind attacker selects one fault using
only $S_{\mathrm{tr}}$, freezes it, and evaluates it only on
$S_{\mathrm{te}}$:
\begin{align}
 \widehat E&=\arg\max_{E\in\cE_t}|S_{\mathrm{tr}}|^{-1}
                  \sum_{s\in S_{\mathrm{tr}}}c_{V_s}(E),\label{eq:blind-select}\\
 C_{\mathrm{blind}}&=|S_{\mathrm{te}}|^{-1}
                  \sum_{s\in S_{\mathrm{te}}}c_{V_s}(\widehat E).
 \label{eq:blind-test}
\end{align}
The held-out seed-aware comparator and relative seed-blind reduction are
\begin{align}
 \overline C_{\mathrm{aware}}
   &=|S_{\mathrm{te}}|^{-1}\sum_{s\in S_{\mathrm{te}}}
       C_{\mathrm{aware}}(V_s),\label{eq:aware-test}\\
 \Delta_{\mathrm{blind}}
   &=1-\frac{C_{\mathrm{blind}}}{\overline C_{\mathrm{aware}}},
   \qquad \overline C_{\mathrm{aware}}>0.
 \label{eq:blind-reduction}
\end{align}
Thus $\Delta_{\mathrm{blind}}=0$ means no reduction relative to seed-aware
selection, while $\Delta_{\mathrm{blind}}=1$ means complete suppression on
the test seeds.  The implementation resolves an exact tie in \cref{eq:blind-select}
by the fixed lexicographic Pauli-label order, making $\widehat E$
deterministic.

\begin{theorem}[Haar acceptance and disturbance]
\label{thm:haar}
Let $V$ be a Haar-distributed $N\times K$ isometry and let $E$ be a fixed
traceless unitary on $\mathbb{C}^{N}$.  Then
\begin{align}
 \E_V[p_{\mathrm{acc},V}(E)]&=\frac{KN-1}{N^2-1},\label{eq:haar-acc}\\
 \E_V[c_V(E)]&=\frac{N(K^2-1)}{K(N^2-1)}.\label{eq:haar-corr}
\end{align}
For every fixed $K\geq2$, both decay as $\Theta(N^{-1})$; for $K=1$,
$c_V(E)=0$ identically.
\end{theorem}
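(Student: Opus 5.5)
Write a Haar isometry as $V=UJ$, where $U$ is Haar on the unitary group $U(N)$ and $J:\mathbb{C}^K\to\mathbb{C}^N$ is the fixed embedding onto the first $K$ basis vectors. Then $P=U\Pi U^\dagger$ with $\Pi=JJ^\dagger$ a fixed rank-$K$ projector. By \cref{eq:pacc},
\[
p_{\mathrm{acc},V}(E)=K^{-1}\Tr(A_E^\dagger A_E)=K^{-1}\Tr(PE^\dagger PE).
\]
By \cref{eq:decomposition},
\[
c_V(E)=p_{\mathrm{acc},V}(E)-K^{-2}|\Tr(PE)|^2 .
\]
Both quantities are degree-two polynomials in $U$ and $\bar U$, so the plan is to evaluate them with the second-moment (two-design) formula. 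The key identity is
\[
\E_U[(U\Pi U^\dagger)^{\otimes 2}]=\alpha I+\beta F,
\]
where $F$ is the swap on $\mathbb{C}^N\otimes\mathbb{C}^N$. The coefficients follow from matching $\Tr(\cdot)$ and $\Tr(F\,\cdot)$. Using $\Tr(\Pi^{\otimes2})=K^2$ and $\Tr(F\Pi^{\otimes2})=\Tr(\Pi^2)=K$, we get
\[
\alpha N^2+\beta N=K^2,\qquad \alpha N+\beta N^2=K,
\]
so $\alpha=\frac{K^2N-K}{N(N^2-1)}$ and $\beta=\frac{KN-K^2}{N(N^2-1)}$.

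Next, express each term as a trace against $P^{\otimes2}$:
\[
\Tr(PE^\dagger PE)=\Tr\!\big[P^{\otimes 2}(E^\dagger\otimes E)F\big],\qquad |\Tr(PE)|^2=\Tr\!\big[P^{\otimes2}(E\otimes E^\dagger)\big].
\]
The swap identity $\Tr[(X\otimes Y)F]=\Tr(XY)$ gives the needed contractions:
\begin{itemize}
\item $\Tr[(E^\dagger\otimes E)F]=\Tr(E^\dagger E)=N$;
\item $\Tr[F(E^\dagger\otimes E)F]=\Tr(E^\dagger)\Tr(E)=0$, since $E$ is traceless;
\item $\Tr(E\otimes E^\dagger)=0$;
\item $\Tr[F(E\otimes E^\dagger)]=\Tr(EE^\dagger)=N$.
\end{itemize}
Hence $\E\Tr(PE^\dagger PE)=\alpha N$ and $\E|\Tr(PE)|^2=\beta N$. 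Therefore
\[
\E\, p_{\mathrm{acc}}=\frac{\alpha N}{K}=\frac{KN-1}{N^2-1},\qquad \E\, c=\frac{\alpha N}{K}-\frac{\beta N}{K^2}=\frac{N(K^2-1)}{K(N^2-1)}.
\]
These match \cref{eq:haar-acc} and \cref{eq:haar-corr}.

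The main obstacle is bookkeeping: putting the swap on the correct side of each product and keeping track of which tensor factor carries $E$ versus $E^\dagger$. This decides which contraction yields $\Tr(E)\Tr(E^\dagger)$, which vanishes, and which yields $N$. As a sanity check, $K=N$ should give $\E\,p_{\mathrm{acc}}=1$ and $\E\,c=1$, and $K=1$ should give $\E\,c=0$.

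For the remaining claims, $K=1$ makes $A_E$ a $1\times1$ matrix, so $A_E^0=0$ and $c_V(E)\equiv0$ identically, not just in expectation. For fixed $K\ge2$, the acceptance expectation satisfies $\frac{KN-1}{N^2-1}\sim K/N$ and the disturbance expectation satisfies $\frac{N(K^2-1)}{K(N^2-1)}\sim (K-K^{-1})/N$, so both are $\Theta(N^{-1})$.
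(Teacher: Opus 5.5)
Your proposal is correct and follows the paper's proof essentially step for step. It uses the same second-moment ansatz $\E[P\otimes P]=\alpha I+\beta F$ with coefficients fixed by the two trace equations, the same swap contractions for $\Tr(PE^\dagger PE)$ and $|\Tr(PE)|^2$, and the same subtraction via \cref{eq:decomposition}. The only differences are that the paper derives the $I$/$F$ form from Haar invariance and Schur's lemma on $\operatorname{Sym}^2\oplus\bigwedge^2$, whereas you cite it as the standard two-design identity, and that you explicitly verify the $\Theta(N^{-1})$ and $K=1$ claims, which the paper's appendix leaves implicit.
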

The proof in \cref{app:proof} uses the second moment of a uniformly random
rank-$K$ projector.  The closed-form fixed-fault expectation provides an
implementation check.  Worst-fault analysis is reported separately as the
maximum over the enumerated fault set, whose expectation is at least the
fixed-fault mean.

\paragraph{Fixed recovery.}
Approximate-QEC performance is also measured with a reproducible transpose
(Petz) recovery~\cite{barnum2002reversing,beny2010general}.  Reuse the
supported list $\{E_a\}_{a=0}^{q}$, set $m=q+1$, and assign each operator the
same probability $1/m$.  The physical noise channel and its Kraus operators
$F_a$ are therefore
\begin{equation}
 F_a=\frac{E_a}{\sqrt m},
 \qquad
 \cN(\omega)=\sum_{a=0}^{q}F_a\omega F_a^\dagger,
 \label{eq:uniform-channel}
\end{equation}
where $\omega$ is any physical $N\times N$ density matrix.  Evaluate this
channel on the maximally mixed code state $\rho=P/K$ to obtain
$\sigma=\cN(\rho)$.  If
$\sigma=\sum_j\lambda_j|j\rangle\!\langle j|$, its support-restricted inverse
is
$\sigma^{-1/2}=\sum_{j:\lambda_j>0}\lambda_j^{-1/2}|j\rangle\!\langle j|$.
In double-precision evaluation, an eigenvalue is retained exactly when it
exceeds $10^{-12}\max\{\lambda_{\max},1\}$; this numerical support threshold is
fixed in the implementation.
Composing the transpose recovery with $V^\dagger$ gives logical recovery
Kraus operators
\begin{equation}
 L_a=K^{-1/2}V^\dagger F_a^\dagger\sigma^{-1/2},
 \qquad 0\leq a\leq q.
 \label{eq:petz-kraus}
\end{equation}
For noise branch $b$ followed by recovery branch $a$, the effective logical
Kraus operator is $M_{ab}=L_aF_bV$.  The entanglement fidelity of the recovered
logical channel is then
\begin{equation}
 F_e=K^{-2}\sum_{a,b=0}^{q}|\Tr(M_{ab})|^2
    =K^{-2}\sum_{a,b=0}^{q}|\Tr(L_aF_bV)|^2.
 \label{eq:petz}
\end{equation}
Here $0\leq F_e\leq1$, and $F_e=1$ means that the recovered channel preserves
half of a maximally entangled logical state perfectly.  The recovery is
recomputed from each sample's $V$ and $\sigma$.

\section{Encoder Ensembles and Audit Protocol}
\label{sec:encoders}

The experiment assesses four distinct encoder alternatives. Haar isometries
serve as an information-theoretic calibration and are evaluated solely as
matrices. Uniform Clifford samples offer an efficient reference for random
stabilizers. The fixed $[[5,1,3]]$ code functions as a positive control. HSE
represents the proposed executable, characterized by per-run reseeding. Each
alternative is evaluated against the same labeled fault set, and the executable
Clifford families undergo further verification through the gate-level challenge
described below.

\paragraph{Haar isometry.}
For seed $s$, NumPy draws independent real variables
$x_{ij},y_{ij}\sim\mathcal{N}(0,1)$ and forms the $N\times K$ matrix
$Z_{ij}=x_{ij}+iy_{ij}$.  Reduced QR factorization gives $Z=QR$, where $Q$
has orthonormal columns and $R$ is upper triangular.  Multiplying
column $j$ of $Q$ by the conjugate phase $R_{jj}^*/|R_{jj}|$ yields the
Haar-distributed isometry $V_s$; a zero diagonal entry receives phase
one~\cite{mezzadri2007generate}.  Thus $(n,k,s)$ determines every Haar sample.
The implementation constructs the $N\times K$ isometry directly and omits an
$N\times N$ unitary circuit.

\paragraph{Uniform Clifford.}
Qiskit's seeded uniform Clifford sampler~\cite{javadiabhari2024quantum}
produces an $n$-qubit Clifford $U_s$.  Let
$W:\mathbb{C}^K\rightarrow\mathbb{C}^N$ be the map that appends $n-k$ ancillas in
$|0\rangle$.  The logical qubits occupy Qiskit wires $0,\ldots,k-1$, and the
ancillas occupy wires $k,\ldots,n-1$.  Because Qiskit orders basis states as
$|q_{n-1}\cdots q_0\rangle$, $W$ maps each logical basis state $|x\rangle$ to
$|0^{n-k}\rangle|x\rangle$, whose full-register integer index is also $x$.
Thus $W$ is the $N\times K$ matrix formed by stacking $I_K$ above zeros, and
$V_s=U_sW$ is obtained by selecting columns $0,\ldots,K-1$ of $U_s$.  This
baseline is efficiently simulable.  We therefore measure the fraction of
finite samples that satisfy the exact-correction threshold.

\paragraph{Fixed QEC control.}
The standard $[[5,1,3]]$ perfect code corrects any fault acting on one of its
five physical qubits~\cite{laflamme1996perfect}.  We include it as a
deterministic control for the 15 single-qubit $X$, $Y$, and $Z$ faults tested in
this paper.  The implementation represents the code directly as an exact
encoding matrix directly and omits encoder-circuit synthesis.  For every tested
fault, no accepted output contains a logical error, and the recovery
reconstructs the logical state with fidelity one.  Because this control has no
synthesized circuit, we exclude it from circuit-cost comparisons.

\paragraph{Hadamard-structured ensemble.}
An HSE circuit consists of $d$ rounds of single-qubit basis mixing followed by
wire mixing and a parallel entangling matching; decoding applies the same gates
in reverse order with their inverses.  Formally, HSE$(n,d;s)$ contains $d$
independently sampled layers, supplied by one NumPy PCG64 pseudorandom
stream~\cite{oneill2014pcg} initialized from $s$ in the following fixed draw
order.
\begin{enumerate}
  \item $n$ Bernoulli$(1/2)$ bits form a Hadamard mask, and an empty mask sets
        the bit of one uniformly drawn qubit.
  \item Each qubit in ascending index order receives an exponent
        $r\in\{0,1,2,3\}$ drawn uniformly and emitted as $r$ phase
        instructions.
  \item A uniform permutation of the $n$ wires is realized by successive
        transpositions, which emit at most $n-1$ SWAP instructions.
  \item A uniform ordering of the $n$ wires pairs consecutive entries into
        $\lfloor n/2\rfloor$ CNOT instructions, with one uniform bit per pair
        selecting the control.
\end{enumerate}
For odd values of $n$, the matching process leaves the final element of that
ordering unmatched.  Hadamard gates exchange $X$ and $Z$ support, phase gates
mix $X$ and $Y$, permutations relocate support, and CNOT matchings distribute
it across qubits.  Paired component ablations assess whether each primitive
contributes beyond circuit depth alone; suppressing the emission of one
primitive leaves the draw order unchanged, so the variants of one seed share
every remaining instruction.  The triple $(n,d,s)$ is therefore a complete
reproducibility identifier for the experiments.  In operational deployment,
an implementation seeking NIST-aligned cryptographic random-bit generation can
derive layer choices from an approved deterministic random bit
generator~\cite{barker2015drbg}; the seed must remain outside the seed-blind
adversary's boundary for that threat model to apply.  FIPS~140-3 provides the
classical precedent
for defining and validating the cryptographic module
boundary~\cite{nist2019fips140}.  Every HSE encoder is a Clifford circuit:
propagating an $X$, $Y$, or $Z$ fault through its Hadamard, phase, SWAP, and
CNOT gates produces another Pauli fault.  This property permits efficient
stabilizer simulation.  The permutation is deliberately included as a mixing
primitive and is accounted for in the cost; its non-local SWAPs prevent HSE
from being described as topology-native prior to transpilation.

\paragraph{Dense audit.}
For each encoder matrix, we evaluate the identity and every fault in
\cref{eq:error-set}.  We compute the mean and maximum acceptance, the mean and
maximum accepted disturbance, the maximum and root-mean-square correction
residuals, and the recovery fidelity $F_e$.  We use double-precision linear
algebra throughout.  We label an encoder exactly correcting when
$R_{\max}\leq10^{-10}$, meaning that a recovery can correct the entire tested
fault set.  We separately label it postselection-safe when its worst accepted
disturbance satisfies $C_{\mathrm{aware}}\leq10^{-10}$, meaning that no tested
fault can both pass the acceptance check and alter the logical state.  The
first label measures correction; the second measures detection and rejection.

\paragraph{Gate-level audit.}
For each encoder and fault, we run one circuit with the logical state
$|0\rangle$ and another with $|+\rangle$.  We encode the state with $U_s$,
inject the fault, decode with $U_s^\dagger$, and measure every qubit.  Before
measuring the $|+\rangle$ circuit, we rotate its logical qubit into the
computational basis.  We accept a shot only when every ancilla returns zero,
and we mark an accepted shot as corrupted when the logical result differs from
the prepared state.  We record the accepted and corrupted-shot totals together
with every raw output string.  For each input, the corruption rate is the
number of corrupted accepted shots divided by the total number of shots.  In
an ideal circuit, the larger of the two rates equals $c_V(E)$: any logical
$X$, $Y$, or $Z$ error changes at least one input state with certainty.

\section{Experimental Methodology}
\label{sec:method}

All experiments encode one logical qubit ($k=1$).  We run four complementary
studies: an exact matrix calculation, an attacker-knowledge comparison, a
gate-level simulation, and a component ablation.

\paragraph{Exact study.}
We evaluate widths $n\in\{4,5,6,7,8\}$ using the identity and every
single-qubit $X$, $Y$, and $Z$ fault.  Each condition contains 200 independently
seeded encoders.  At each width, we test HSE depths
$d\in\{1,2,4,6,8,12\}$ and include one Haar and one uniform-Clifford reference
condition.  We assign encoder seed $2701+10000n+i$ to sample $i$ before the
analysis.  At $n=8$, the calculation uses $256\times256$ matrices and 25
operators, including the identity.  We also evaluate the fixed five-qubit code
once against the identity and its 15 single-qubit Pauli faults.

\paragraph{Attacker-knowledge study.}
We reuse the same widths, depths, and seeds and retain a separate score for
every fault.  Samples 0--99 select one seed-blind fault
(\cref{eq:blind-select}); samples 100--199 evaluate that fixed fault and the
seed-aware worst case (\cref{eq:aware,eq:blind-test}).  We retain every training
and test row together with the selected fault label.  We also evaluate a
uniformly random fault as a non-adaptive reference.

\paragraph{Stim study.}
Using Stim, we run HSE circuits at $n\in\{5,9,13,17\}$ and
$d\in\{2,4,8,12\}$ with \StimSeeds{} encoder seeds per condition.  We test every
single-qubit Pauli fault with both logical input states.  Ideal runs use one
shot per circuit as a deterministic cross-check; noisy runs use 1,000 shots.

In noisy runs, we apply single-qubit depolarization with $p_1=10^{-3}$ after
each one-qubit encoder or decoder gate and two-qubit depolarization with
$p_2=10^{-2}$ after each two-qubit gate.  We define $p_m$ as the probability
that an $m$-qubit gate is followed by a uniformly chosen nonidentity Pauli;
otherwise, the channel applies the identity.  We decompose each SWAP into three
CNOTs and
apply $p_2$ after each CNOT.  We leave the probe-basis changes and injected
fault noiseless.  We derive an independent simulation seed from the width,
depth, encoder seed, and circuit index.

We report the fraction of all shots that are both accepted and corrupted as
the primary gate-level result.  We use this gate-local model because physical
single- and two-qubit operations continue to introduce errors in below-threshold
QEC hardware~\cite{google2025below}.  Because the circuits
target no specific device, the simulation omits
idling, leakage, measurement, reset, and correlated errors.  Hardware-code
selection and threshold estimation require a device-specific noise model and a
fault-tolerant memory experiment.

\paragraph{Cost.}
For HSE and uniform-Clifford encoders, we record source-circuit depth and the
number of one- and two-qubit instructions before transpilation.  We count each
source-level SWAP as one two-qubit instruction.  These counts provide a
reproducible implementation-cost proxy.  We omit circuit-cost estimates for
the Haar reference, which we sample directly as a matrix, because any
synthesized realization would depend on the compiler and target gate basis.

\paragraph{Component ablation.}
At $(n,k,d)=(8,1,12)$, we compare the full HSE with four variants that each
remove one component: Hadamard gates, phase gates, permutations, or CNOT gates.
All variants use the same 200 seeds and retain the same random draws for the
remaining components.  For every variant, we evaluate all single-qubit Pauli
faults and report the mean and worst accepted disturbance, the two-qubit
instruction count, and the fraction of postselection-safe seeds.  A seed is
postselection-safe when every tested fault has zero accepted disturbance.  We
report the exact-correction fraction separately using
$R_{\max}\leq10^{-10}$.  Paired bootstrap intervals resample matched seeds.

\paragraph{Uncertainty.}
We treat each encoder seed as an independent sample.  To form 95\%
percentile-bootstrap confidence intervals, we resample seeds 5,000 times and
take the 2.5th and 97.5th percentiles.  For attacker scores, we resample only
the held-out test seeds and keep the training-selected fault fixed.  Measuring
uncertainty in the fault-selection step would require nested resampling of both
sets.  We use 95\% Wilson score intervals for the fraction of exactly correcting
HSE encoders at each depth.

\paragraph{Comparisons and HSE selection.}
At $n=8$, we make three primary comparisons: analytical versus measured Haar
means, uniform Clifford versus HSE exact-correction fractions, and HSE
disturbance versus two-qubit instruction count across depths.  We retain every
condition and failed sample in the comma-separated values (CSV) table.  We
select the reported HSE depth by maximizing the exact-correction fraction,
then breaking ties by median worst-fault disturbance and median two-qubit
instruction count.  Because the same samples select and summarize the depth,
we treat this comparison as descriptive and perform no confirmatory superiority
test.

\section{Results}
\label{sec:results}

The results answer five questions.  Does the audit reproduce the Haar
prediction and recognize a known QEC code?  Why does deeper HSE protect more
sampled seeds?  Which layer components create that protection?  When does
reseeding reduce attacker success?  Does gate-level execution agree with the
matrix audit?  \Cref{fig:dense-results} and \cref{tab:dense-results} address the
first two questions.

\begin{figure}[t]
  \centering
  \includegraphics[width=\columnwidth]{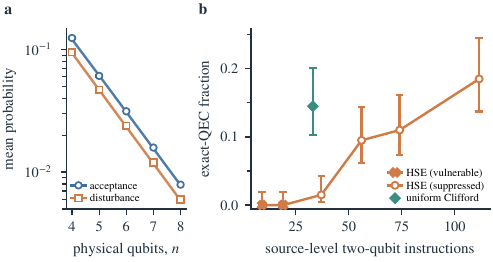}
  \caption{Dense results for $k=1$ over \ResultSeeds{} seeds per condition.
  (a) Closed-form Haar expectations (lines) and measured seed means (markers)
  for acceptance and accepted disturbance.  (b) Fraction of $n=\ResultNMax$
  encoders that correct every tested single-qubit fault, plotted against the
  median source-level two-qubit instruction count.  Connected points are HSE
  depths; the diamond is uniform Clifford.  Crosses mark median worst-fault
  disturbance above $0.5$; open circles mark values below it.  Bars show 95\%
  bootstrap intervals in (a) and Wilson intervals in (b).}
  \label{fig:dense-results}
\end{figure}
\begin{table}[t]
  \centering
  \caption{Executable encoders at $n=8$ over 200 seeds.  ``Exact QEC'' gives
  the percentage that correct every tested single-qubit fault.  Petz fidelity
  and pre-transpilation two-qubit instruction counts are medians.}
  \label{tab:dense-results}
\begin{tabular}{@{}lrrr@{}}
\toprule
Encoder & Petz $F_e$ & Exact QEC (\%) & 2Q instr. \\
\midrule
Clifford & 0.920 & 14.5 & 33 \\
\ensuremath{\mathrm{HSE}_{1}} & 0.784 & 0.0 & 9 \\
\ensuremath{\mathrm{HSE}_{2}} & 0.800 & 0.0 & 19 \\
\ensuremath{\mathrm{HSE}_{4}} & 0.867 & 1.5 & 37 \\
\ensuremath{\mathrm{HSE}_{6}} & 0.920 & 9.5 & 56 \\
\ensuremath{\mathrm{HSE}_{8}} & 0.920 & 11.0 & 74 \\
\ensuremath{\mathrm{HSE}_{12}} & 0.920 & 18.5 & 112 \\
\bottomrule
\end{tabular}

\end{table}

\paragraph{Why the Haar curves fall.}
The logical dimension remains fixed at $K=2$, while the physical dimension
$N=2^n$ doubles with each added qubit.  A fixed physical fault therefore has
less overlap with a random two-dimensional code subspace as $n$ grows, causing
both curves in \cref{fig:dense-results}(a) to fall approximately as $2^{-n}$.
Disturbance remains below acceptance because an accepted fault can contain a
scalar component that acts trivially on the logical state.  The measured curves
track the closed forms to within \HaarRelativeError\% at the largest width,
which validates the metric calculation before it is applied to HSE.

\paragraph{What the encoder comparison shows.}
The fixed $[[5,1,3]]$ code produces zero accepted disturbance and unit recovery
fidelity because it was designed to correct every single-qubit fault.  Random
Clifford and HSE encoders have no such per-sample guarantee, so only a subset
meet the exact-correction criterion in \cref{fig:dense-results}(b).  The HSE
fraction rises with depth because repeated basis changes and CNOT matchings
spread logical information across more qubits, making a local fault more likely
to produce a detectable syndrome.  The same layers also increase the gate
count.  Thus the figure shows a protection--cost tradeoff; its exact-correction
fraction remains below the deterministic guarantee of a designed code.  Deep
HSE and uniform Clifford have similar median recovery fidelities in
\cref{tab:dense-results} and different exact-correction fractions because fidelity averages
behavior over the channel, whereas exact correction requires every tested
fault to be correctable.

\paragraph{Why the ablation separates the components.}
Hadamard gates mix bit- and phase-error support, and CNOT gates entangle qubits
and spread local faults into syndrome-bearing patterns.  Removing either
component therefore leaves zero postselection-safe seeds.  Phase gates mainly
relabel the $X$ and $Y$ components, and permutations relabel qubit locations.
Because the audit tests every $X$, $Y$, and $Z$ fault on every qubit, the weak
aggregate changes from removing these components are consistent with the
symmetry of the fault set.  Their removal still changes which individual seeds
are safe but produces no resolved aggregate loss at this condition.  Repeated
basis mixing and entanglement create the observed protection.  Permutation is
the strongest candidate for reducing cost.

\paragraph{Knowledge-model comparison.}
The gap in \cref{fig:attack-knowledge} appears because each seed creates a
different physical-to-logical fault map.  A fault chosen on training seeds
often misses the vulnerable location and Pauli type of a fresh encoder, so the
decoder rejects it more often.  A seed-aware attacker avoids this loss
by choosing a new worst fault for every encoder.  Greater depth lowers even the
seed-aware curve because more sampled encoders reject every tested fault; it
drives the seed-blind and random curves close together because a fault that is
strong for one group of seeds transfers poorly to another.
\begin{figure}[t]
  \centering
  \includegraphics[width=.431\columnwidth]{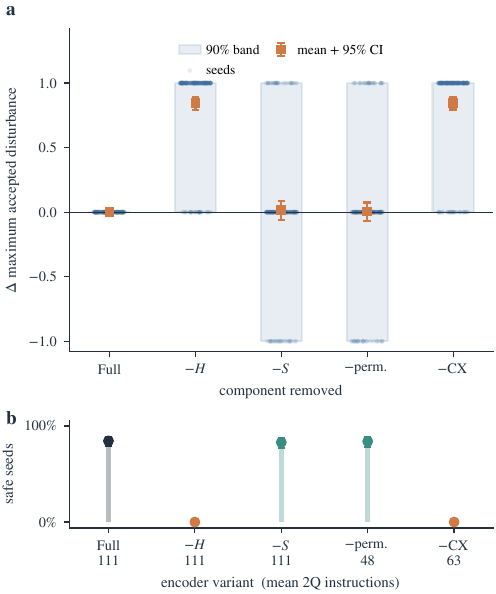}\hfill
  \includegraphics[width=.549\columnwidth]{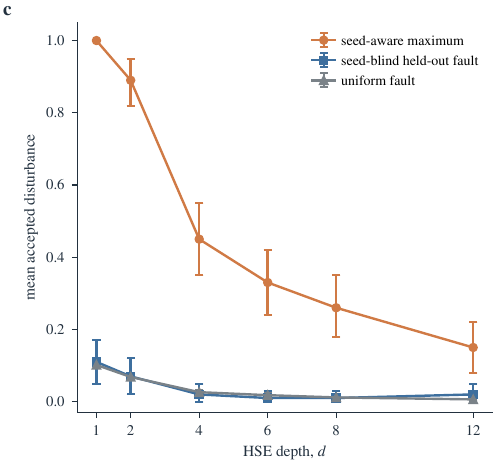}
  \caption{HSE ablation and attacker knowledge.  Left: paired ablation at
  $(n,k,d)=(\AblationN,\AblationK,\AblationDepth)$ over \AblationSeeds{} seeds.
  (a) Change in worst-fault disturbance relative to the full HSE; squares show
  means and shading the central 90\%.  (b) Fraction of seeds safe against every
  tested fault; labels give mean two-qubit instruction counts.  Right: (c)
  held-out scores at $n=\AttackN$ for seed-aware, seed-blind, and uniformly
  random fault selection.  Bars show 95\% intervals.}
  \label{fig:hse-ablation}
  \label{fig:attack-knowledge}
\end{figure}
At the selected condition, this separation gives an \AttackReduction\%
reduction from the seed-aware to the held-out seed-blind score.  Acceptance and
disturbance occur simultaneously, indicating that rejection creates a gap.  The
benefit applies within the specified knowledge boundary: reseeding diminishes
the transferability of an established fault.  A provider or compiler that knows
the seed before fault selection retains the seed-aware selection advantage.

\paragraph{Gate-level scaling.}
In \cref{fig:stim-results}(a), the ideal curves decrease with width and depth
for the same reason as the dense results: additional qubits allow more syndrome
checks, and increased mixing heightens the likelihood that an injected local
fault reaches these checks.  Ideal Stim aligns with the dense classification in
\StimIdealAgreement\% of overlapping cases, indicating that the matrix and
gate-level audits capture the same mechanism.
\begin{figure}[t]
  \centering
  \includegraphics[width=\columnwidth]{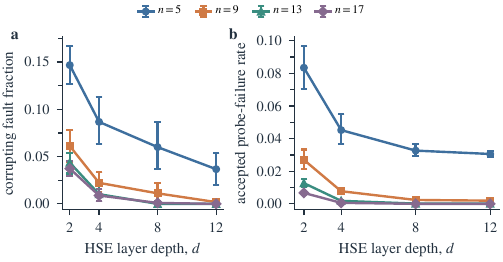}
  \caption{Gate-level HSE results over \StimSeeds{} seeds per condition.
  (a) Fraction of single-qubit Pauli faults that cause an accepted logical
  change in ideal circuits.  (b) Fraction of all noisy shots that are both
  accepted and corrupted under gate-local depolarization with
  $(p_1,p_2)=(10^{-3},10^{-2})$.  Points show seed means and bars show 95\%
  bootstrap intervals.  Neither panel includes idle or leakage noise.}
  \label{fig:stim-results}
\end{figure}
Panel~(b) declines through a different mechanism.  Wider and deeper
circuits contain more noisy gate locations, leading to faults that increasingly
trigger nonzero ancilla outcomes and subsequent rejection.  Under the widest
and deepest condition, mean acceptance is only \StimNoisyAcceptance{}, while
the unconditional accepted logical failure rate is \StimNoisyMean{}.  This low
failure rate records an availability collapse caused by rejection.
Postselection can safeguard integrity by discarding results.  A practical
device must also achieve a specified acceptance target over repeated
fault-tolerant memory cycles.

\section{Related Work}
\label{sec:related}

Approximate QEC replaces exact Knill--Laflamme equality with operational
recovery guarantees~\cite{beny2010general}; transpose recovery supplies a
near-optimal fixed construction in broad settings~\cite{barnum2002reversing}.
Haar-random codes motivate near-limit benchmarks~\cite{ma2025haar,sommers2025spectral},
while designs and random circuits replace selected Haar moments with efficient
ensembles~\cite{dankert2009designs,brandao2016designs}.  Stabilizer simulation
and Stim scale experiments whose encoders use only Clifford
gates~\cite{aaronson2004stabilizer,gidney2021stim}.

\paragraph{Comparison with prior approaches (\cref{tab:gap}).}
Quantum authentication~\cite{barnum2002authentication} uses keyed encodings
to detect arbitrary tampering with bounded soundness error, and its
accept/reject semantics separate rejection from accepted corruption.  Its
guarantee assumes that the party applying the attack cannot observe the key;
provider-side compilation reveals the encoder seed.  The seed-aware metric
therefore measures accepted disturbance when the seed is visible to the fault
source.  Fixed stabilizer codes give worst-case guarantees inside their
designed fault set, while postselected memories report rejection separately.
The adversarial audit adds per-seed maximization over the enumerated fault set.
Design- and random-circuit ensembles are executable and reseedable; the audit
supplements their moment guarantees with the worst single fault against each
finite-depth sample.

Randomized compiling tailors coherent errors into stochastic Pauli noise and
has been demonstrated on a superconducting quantum processor
~\cite{wallman2016noise,hashim2021randomized}; it
randomizes gate frames for noise shaping, whereas HSE randomizes the code
embedding against an adversarial fault.  Hardware security primitives for
quantum computers include qubit-based physical unclonable
functions~\cite{phalak2021quantum} and trusted execution environments
that obfuscate control pulses from an untrusted
cloud~\cite{trochatos2023quantum}; both protect assets other than
accepted logical integrity.  Randomized encodings have also been studied
for adversarial robustness of quantum
classifiers~\cite{gong2024randomized}.  The present work evaluates logical data
integrity.  QuFI provides
complementary fault-injection machinery for simulated and physical quantum
circuits~\cite{oliveira2022qufi}.  Prior work identifies control-pulse
manipulation and crosstalk as physical fault-injection channels related to the
enumerated Pauli abstraction~\cite{xu2024fault,deshpande2022antivirus}.  Those
works characterize the attack surface; the HSE experiment evaluates a
randomized defense.  Recent QEC control
work uses syndrome data to adapt analog controls during
operation~\cite{sivak2026reinforcement}.  HSE randomizes the code embedding
before a bounded fault.

\paragraph{Standards context.}
FIPS~140-3 defines a cryptographic boundary and covers approved security
functions, sensitive-parameter management, self-tests, and mitigation of other
attacks~\cite{nist2019fips140}, while SP~800-90A specifies deterministic random
bit generation and seed construction~\cite{barker2015drbg}.  The finalized
post-quantum standards and NIST's initial draft transition plan provide a
separate example of concrete algorithms plus a migration schedule
~\cite{nist2024fips203,nist2024fips204,nist2024fips205,nist2024ir8547}.
Dahiya et al.~\cite{dahiya2024randomness} show that predictable or
attacker-controlled randomness leaves machine-learning systems vulnerable even
when generic randomness tests pass.  Concretely, the held-out seed-blind
experiment tests the analogous requirement for quantum encoding: the attacker
commits to a fault before the fresh encoder seed is revealed.

\section{Conclusion}

The proposed audit returns three values for each fault.  They are rejected
weight, harmless accepted weight, and accepted logical disturbance.
It then aggregates that last quantity into a per-seed worst case and a
held-out seed-blind score, while the Haar expectation provides a closed-form
calibration.  The same labeled fault set is evaluated by dense linear algebra
and by an executable Stim challenge.

The protocol comparisons delineate the outcomes of reseeding.  Within the
tested weight-one model, the fixed $[[5,1,3]]$ control effectively corrects
every supported fault.  The selected HSE condition achieves exact correction
in \HSEBestExactFraction\% of the sampled encoders.  Designed QEC ensures an
exact-correction guarantee.  At $n=\AttackN$ and $d=\AttackDepth$, the mean accepted disturbance
decreases from \AttackAwareMean{} for a seed-aware selection to
\AttackBlindMean{} for faults committed on separate training seeds.  Since
acceptance equals disturbance for the selected Clifford--Pauli faults,
rejection accounts for the \AttackReduction\% discrepancy; harmless acceptance
contributes zero.

The evaluation covers weight-one Pauli injection, gate-local
depolarization after encoder and decoder gates, an
encode--inject--decode challenge, encoders built only from Clifford gates through
$n=\StimNMax$, and a seed-blind fault source that commits before compilation.
The resulting measurements support encoder-depth selection within this fault
and trust model.  Confidentiality, availability, authentication, malicious
provider behavior, and logical-threshold estimation require separate
mechanisms and experiments.

\bibliographystyle{unsrt}
\bibliography{refs}

\appendix
\section{Proof of the Haar expectation theorem}
\label{app:proof}

Let $P_0=\sum_{j=1}^{K}|j\rangle\!\langle j|$ be a fixed rank-$K$
projector on $\mathbb{C}^{N}$.  A uniformly random rank-$K$ projector can be
written as $P(U)=UP_0U^\dagger$, where $U$ is Haar distributed on $U(N)$.
Hence its second moment is
\[
 M\equiv\E[P\otimes P]
   =\int_{U(N)}P(U)\otimes P(U)\,dU.
\]
For any fixed $W\in U(N)$, left invariance of Haar measure and the change of
variables $U'=WU$ give
\[
 \begin{aligned}
 &(W\otimes W)M(W^\dagger\otimes W^\dagger)\\
 &\quad=\int_{U(N)}P(U')\otimes P(U')\,dU'=M.
 \end{aligned}
\]
Thus $M$ commutes with $W\otimes W$ for every $W\in U(N)$.  The tensor space
decomposes as
\[
 \mathbb{C}^{N}\otimes\mathbb{C}^{N}
 =\operatorname{Sym}^2(\mathbb{C}^{N})
  \oplus\bigwedge^2(\mathbb{C}^{N}),
\]
the two irreducible subspaces of the $W\otimes W$ action, with projectors
$\Pi_+=(I_{N^2}+F)/2$ and $\Pi_-=(I_{N^2}-F)/2$.  Here $F$ is the swap
operator defined by
$F(|x\rangle\otimes|y\rangle)=|y\rangle\otimes|x\rangle$.  Schur's lemma
therefore gives $M=\lambda_+\Pi_++\lambda_-\Pi_-$.  Setting
$a=(\lambda_++\lambda_-)/2$ and $b=(\lambda_+-\lambda_-)/2$ yields
\begin{equation}
 \E[P\otimes P]=aI_{N^2}+bF,
 \label{eq:projector-ansatz}
\end{equation}
where $a$ and $b$ are real coefficients.  Because $P$ has rank $K$ and is a
projector, $\Tr(P)=\Tr(P^2)=K$.  Taking the trace of
\cref{eq:projector-ansatz}, first directly and then after multiplication by
$F$, gives the two linear equations
\begin{equation}
 aN^2+bN=K^2,
 \qquad
 aN+bN^2=K.
 \label{eq:projector-constraints}
\end{equation}
Solving \cref{eq:projector-constraints} yields
\begin{equation}
 a=\frac{K(KN-1)}{N(N^2-1)},\qquad
 b=\frac{K(N-K)}{N(N^2-1)}.
 \label{eq:projector-moment}
\end{equation}
For any $N\times N$ matrices $X$ and $Y$, the swap identity is
$\Tr[(X\otimes Y)F]=\Tr(XY)$.  Applying it to the fixed traceless unitary
$E$ gives the two contractions needed below:
\begin{equation}
\begin{aligned}
 \Tr(PE^\dagger PE)
   &=\Tr[(P\otimes P)(E^\dagger\otimes E)F],\\
 |\Tr(PE)|^2
   &=\Tr[(P\otimes P)(E\otimes E^\dagger)].
\end{aligned}
 \label{eq:haar-contractions}
\end{equation}
Substitute \cref{eq:projector-ansatz} into the first contraction.  The
$aI_{N^2}$ term contributes
$a\Tr(E^\dagger E)=aN$, while the $bF$ term contributes
$b|\Tr(E)|^2=0$.  Since
$p_{\mathrm{acc},V}(E)=K^{-1}\Tr(PE^\dagger PE)$ by \cref{eq:pacc},
\begin{equation}
\begin{aligned}
 \E\,p_{\mathrm{acc},V}(E)
 &=K^{-1}(aN)\\
 &=\frac{KN-1}{N^2-1}.
\end{aligned}
 \label{eq:proof-haar-acceptance}
\end{equation}
For the second contraction, the $aI_{N^2}$ term again vanishes because it is
$a|\Tr(E)|^2$, and the $bF$ term contributes
$b\Tr(EE^\dagger)=bN$.  Therefore
\begin{equation}
\begin{aligned}
 \E|\Tr(A_E)|^2
 &=\E|\Tr(PE)|^2\\
 &=bN
 =\frac{K(N-K)}{N^2-1}.
\end{aligned}
 \label{eq:proof-haar-trace}
\end{equation}
Finally take expectations in \cref{eq:decomposition}, substitute
\cref{eq:proof-haar-acceptance,eq:proof-haar-trace}, and simplify:
\begin{equation}
\begin{aligned}
 \E c_V(E)
 &=\frac{KN-1}{N^2-1}
   -\frac{N-K}{K(N^2-1)}\\
 &=\frac{N(K^2-1)}{K(N^2-1)}.
\end{aligned}
 \label{eq:proof-haar-disturbance}
\end{equation}
These are exactly \cref{eq:haar-acc,eq:haar-corr}.

\section{Open Science}
\label{app:open-science}

The implementation contains seeded ensemble generators, exact and Stim
runners, analysis scripts, checked result files, and deterministic unit tests.
The canonical entry point is \texttt{artifact/run.py}; the top-level
reproduction script provides bounded smoke and paper modes.

Every canonical run emits a manifest containing the resolved command and seed
arguments, the run time in Coordinated Universal Time, Python and dependency
versions, Git revision when
available, repository-relative output paths, and checksums.  Raw integer
counts and tidy derived tables are retained.  Figure scripts read the arrays
without pickle-dependent metadata, run headlessly, and write the LaTeX macros
consumed by this manuscript.

The implementation is available through the URL given in the
implementation-availability paragraph.  The supplied release builder uses an
explicit file allowlist, scans for local paths and credentials, and writes
checksums for the copied files.

\section{Ethical Considerations}

Experiments operate on generated circuits and insert standard Pauli gates in
local simulation.  The implementation contains circuit generation, Pauli insertion,
dense analysis, and Stim simulation; it contains no co-tenant interaction,
external-system modification, or pulse-level control code.

\section{Reproduction Checklist}

All commands run from the repository root in the pinned container.
Reproduction proceeds as follows.
\begin{enumerate}
  \item Build the execution image with
  \texttt{docker compose build artifact}.
  \item Run the bounded smoke workflow with
  \texttt{docker compose run {-}{-}rm artifact} before the full experiment.
  This command executes the deterministic unit tests, a two-seed dense audit,
  a held-out attacker-selection check, an HSE ablation, and a one-seed Stim
  integration check.  Any failed invariant terminates the workflow.
  \item Regenerate the declared paper grid.  The command validates and resumes
  durable per-seed checkpoints, then regenerates summaries, figures, LaTeX
  macros, and the PDF: \texttt{docker compose run {-}{-}rm paper}.
  Rerunning the command is safe: a checkpoint must be an exact ordered prefix
  of the declared parameter grid.
  \item For an independent computation from raw seeded jobs, bypass the
  checked-in checkpoints by choosing a nonexistent or empty directory and run:
\begin{verbatim}
docker compose run --rm artifact \
  ./artifact/reproduce.sh --paper-fresh \
  artifact/results/fresh-paper
\end{verbatim}
  The complete grid uses 200 dense seeds, 200 paired ablation seeds, and the
  declared Stim sweeps; it requires several CPU-hours, at least 8~GiB RAM, and
  approximately 1~GiB free disk.
  \item Recompute only the derived statistics, figures, macros, compact result
  bundle, and manuscript from completed canonical checkpoints with
  \texttt{docker compose run {-}{-}rm postprocess}.
  Compare the SHA-256 hashes in \texttt{artifact/manifests/} and in the embedded
  manifest of \texttt{artifact/results/paper-results.npz}.  The final manuscript
  is \texttt{paper/main.pdf}.
\end{enumerate}

\paragraph{Equation-to-code traceability.}
The following mapping permits the sampling algorithm and each mathematical
result to be reproduced independently of the plotting layer.
\begin{itemize}
  \item The HSE layer algorithm of \cref{sec:encoders} is implemented by
  \texttt{hadamard\_structured\_circuit} in \texttt{artifact/src/ensembles.py};
  its keyword flags suppress the emission of one primitive without altering the
  draw order, which supplies the paired ablation variants.
  \item \Cref{eq:error-set} is implemented by
  \texttt{pauli\_errors} in \texttt{artifact/src/metrics.py}; the stored Pauli
  strings expose the exact enumeration order.
  \item \Cref{eq:kl,eq:kl-residual,eq:uniform-channel,eq:petz-kraus,eq:petz}
  are evaluated by \texttt{evaluate\_isometry} in the same file.  The fixed
  five-qubit baseline checks $R_{\max}<10^{-10}$ and $1-F_e<10^{-10}$.
  \item \Cref{eq:compressed-action,eq:pacc,eq:corr,eq:decomposition} are
  evaluated per labeled fault by \texttt{fault\_integrity\_profile}.  Unit
  tests verify $0\leq c_V(E)\leq p_{\mathrm{acc},V}(E)\leq1$.
  \item \Cref{eq:haar-acc,eq:haar-corr} are implemented by
  \texttt{haar\_expected\_acceptance} and
  \texttt{haar\_expected\_accepted\_corruption}; the dense runner compares
  these closed forms with seeded QR samples.
  \item \Cref{eq:aware,eq:blind-select,eq:blind-test,eq:aware-test,eq:blind-reduction}
  are implemented in
  \texttt{artifact/run\_attack\_knowledge.py}.  Its CSV retains the seed split,
  every candidate fault, the selected fault, and both component scores.
\end{itemize}

Each raw row records the defining parameters $\mathsf{F}$, $n$, $k$, $d$, $s$,
and $t$; Stim rows additionally record the shot count and noise probabilities
$p_1$ and $p_2$.  Run manifests record the resolved command, operating system,
Python and package versions, Git revision when available, output paths, and
SHA-256 hashes.  Consequently every reported value can be traced from a paper
macro to a derived table, raw rows, a parameterized generator, and the equation
that defines its metric.

\end{document}